\documentclass[journal]{IEEEtran}
\pagestyle{empty}
\usepackage{amsmath}
\usepackage{algorithm}
\usepackage{algpseudocode}
\usepackage{leftindex}
\usepackage{amssymb}
\usepackage{algpseudocode}
\usepackage{subcaption}
\usepackage{graphicx}
\usepackage{amsthm}
\usepackage[justification=centering]{caption}

\theoremstyle{definition}

\title{Data-Driven LQR using Reinforcement Learning and Quadratic Neural Networks}

\author{Soroush Asri, Luis Rodrigues\\Department of Electrical Engineering, Concordia University}

\newtheorem{remark}{Remark}

\newtheorem{lemma}{Lemma}

\begin{document}

\maketitle

\begin{abstract}


This paper introduces a novel data-driven approach to design a linear quadratic regulator (LQR) using a reinforcement learning (RL) algorithm that does not require the system model. The key contribution is to perform policy iteration (PI) by designing the policy evaluator as a two-layer quadratic neural network (QNN).
This network is trained through convex optimization. To the best of our knowledge, this is  the first time that a QNN trained through convex optimization is employed as the Q-function approximator (QFA).
The main advantage is that the QNN's input-output mapping has an analytical expression as a quadratic form, which can then be used to obtain an analytical expression for policy improvement.
This is in stark contrast to available techniques in the literature that must train a second neural network to obtain the policy improvement.
The article establishes the convergence of the learning algorithm to the optimal control provided the system is controllable and one starts from a stabilitzing policy.
A quadrotor example demonstrates the effectiveness of the proposed approach.


    
\end{abstract}

\begin{IEEEkeywords}
LQR, Reinforcement Learning, Policy Iteration, Quadratic Neural Networks.
\end{IEEEkeywords}

\section{Introduction}

Optimal controllers minimize a given cost function subject to a dynamic model.
This design method has received a lot of attention, especially due to its potential in applications such as autonomous vehicle navigation and robotics, economics and management, as well as energy optimization~\cite{bryson2018applied}, to cite a few.
Traditionally, optimal control design relies on well-established techniques like dynamic programming, calculus of variations, and Pontryagin's maximum principle~\cite{lewisoptimalbook}. While these methods have proven effective in many cases, they often face challenges when dealing with large-scale, nonlinear, or uncertain systems~\cite{extendedARE}\cite{jiang2015global}. To address this issue, one often approximates a nonlinear system by a linear model around an operating point, thus enabling the design of a linear quadratic regulator (LQR)~\cite{kalman1960contributions}.
One of the main advantages of LQR~\cite{kalman1960contributions} is that it has a closed-form solution~\cite{anderson2007optimal}\cite{sima2021algorithms} as a function of a matrix that is the positive definite solution of a Riccati equation~\cite{lewisoptimalbook}. 
An iterative technique for solving the Riccati equation is discussed in article~\cite{kleinman1968iterative}, which requires an accurate linear model of the system.
However, there are instances where complex nonlinear systems cannot be adequately approximated by linear models or where the system's dynamics remain unknown. 
Therefore, conventional techniques for solving optimal control problems may not be applicable. As a result, researchers have explored data-driven methods such as RL, imitation learning, and Gaussian process regression as alternatives for solving optimal control problems~\cite{tailor2019learning}\cite{proctor2016dynamic}\cite{boedecker2014approximate}.
RL is a powerful data-driven approach that enables an agent to learn optimal control policies by interacting with the environment and adjusting its actions or control policies  to minimize a cumulative cost~\cite{kaelbling1996reinforcement}. The motivation to use RL as an approxiation to optimal control designs stems from several key factors, including adaptability, the ability to handle complex dynamics, learning from real-world experience, and overcoming inaccuracies in system models~\cite{bertsekas2019reinforcement}\cite{kiumarsi2017optimal}. 
The majority of research dedicated to RL in optimal control investigates discrete-time optimal control using value-based RL algorithms~\cite{lewis2009reinforcement}.
In value-based RL, the core concept is to estimate value functions, such as the Q-function~\cite{bertsekas2019reinforcement}, which can then be used to derive optimal policies. A common approach to estimate value functions  is to use the temporal difference (TD) equation~\cite{TD2018}.
In such approach the Bellman error~\cite{tedrake2016underactuated} is reduced by iteratively updating the value function approximation. This is called adaptive dynamic programming (ADP)~\cite{lewis2009reinforcement}. 
ADP is widely favored due to its link with dynamic programming and the Bellman equation~\cite{bertsekas2012dynamic}. The application of ADP methods such as the policy iteration (PI) and the value-iteration (VI) to feedback control systems are discussed in references~\cite{lewis2009reinforcement}\cite{prokhorov1997adaptive}. 
For optimal control, PI is preferred over VI  due to its stability arising from the policy improvement guarantee, which leads to reliable convergence to the optimal policy~\cite{LQR_RL_stable}.  
Additionally, Q-learning, an ADP algorithm that approximates the Q-function, can be used to achieve optimal control without relying on a system model~\cite{sutton2018reinforcement}.

It is common to use a neural network (NN) to approximate the value function during the policy evaluation step of a PI algorithm~\cite{khan2012reinforcement}\cite{ADHDP}.  
However, the drawbacks of using NNs as the value function approximator (VFA) are
    (i) the optimization of the neural network's weights is not convex and training the neural network yields only locally optimal weights,
    (ii) selecting the appropriate architecture for the NN typically involves a trial and error procedure, 
    (iii) the input-output mapping of the NN lacks an analytical expression and as a consequence a second neural network is needed to perform policy improvement,
    (iv) providing a comprehensive proof of convergence to the optimal control is difficult or even unattainable.
To address the mentioned issues, a two-layer quadratic neural network (QNN) trained by a convex optimization introduced in reference~\cite{QNN} will be chosen as the VFA. The advantages of using the two-layer QNN as the VFA compared to other neural networks are:
    (i) Two-layer QNNs are trained by solving a convex optimization and therefore the global optimal weights are found~\cite{QNN},
    (ii) the optimal number of neurons in the hidden layer is obtained ad a by-product~\cite{QNN},
    (iii) the input-output mapping of the QNN is a quadratic form~\cite{QNN,Luis_QNN} and therefore one can analytically minimize the value-function with respect to the  control policy.
Reference~\cite{Luis_QNN} addresses applications of QNNs to system identification and control of dynamical systems.

LQR design problems have been addressed using various RL approaches as they can serve as benchmarks for evaluating the performance of RL algorithms~\cite{matni2019self,yaghmaie2022linear} because their solution is known in closed-form. 
This fact has motivated the use of ADP to solve discrete-time LQR in reference~\cite{bradtke1994adaptive}, discrete-time LQ Gaussian in references~\cite{lewis2010reinforcement}\cite{rizvi2019experience}, and discrete-time LQ tracking in references~\cite{kiumarsi2014reinforcement}\cite{optimal_tracking_with_measured_input_output} without using the system model.     
Moreover, in the Q-learning scheme proposed in reference~\cite{bradtke1994adaptive} one can obtain the desired LQR controller by solving a least-squares optimization to derive the Q-function provided the system
is controllable and a stabilizing linear state feedback controller is known.

This paper designs a discrete-time LQR controller using Q-learning.
More specifically, we propose to use a two-layer QNN as the VFA in Q-learning for a LQR problem, for which the value function is known to be quadratic.
To the best of our knowledge, this represents the first result with a convex optimization-trained QNN used as a VFA. Additionally, the designed controller is proven to converge to the LQR controller provided the system is controllable and an initial stabilizing policy is known.
Simulations conducted in MATLAB illustrate the convergence of the learning algorithm to the optimal controller for different initial stabilizing policies.

This paper is organized as follows. Section~\ref{section: problem statement} presents the problem statement. Section \ref{section: QNN} reviews two-layer QNNs. Section~\ref{section: policy-iteration based ADHDP} presents the general form of the Q-learning algorithm.  Section~\ref{section: our approach} contains the proposed approach with the proof of convergence. Section~\ref{section: LQR implementation} focuses on examples, which is followed by the conclusions.

\section{Problem statement}
\label{section: problem statement}
An unknown controllable linear system model is written as
    \begin{gather} 
          x_{k+1}=Ax_k+Bu_k \label{linear_dynamics}
    \end{gather} 
    
    \noindent where $x_k \in \mathbb{R}^{n_x}$ is the state vector, $u_k \in \mathbb{R}^{n_u}$ is the input vector and $A, \:  B $ are unknown matrices constrained to be such that the pair $(A,B)$ is controllable.
Define the policy $\pi (.)$ as a linear mapping from the state vector to the input vector as
    \begin{gather} 
          u_{k}=\pi (x_k) = -K^\pi x_k 
    \end{gather} 
    
    \noindent where $K^\pi$ is a matrix to be determined by the designer. Define the local cost as
    \begin{gather}
        c(x_k,u_k)= x_k^T Q x_k + u_k^T R u_k
    \end{gather}

    \noindent where $Q$, and $R$ are both positive definite. The value function when one follows the control policy $\pi(.)$ is defined as
    \begin{equation}
        V^\pi(x_k) =   \sum_{i=k}^{\infty} \gamma^{i-k} \left( x_i^T Q x_i + u_i^T R u_i \right)
    \end{equation}

 \noindent where $0<\gamma \leq 1$ is a discount factor.

The objective is to obtain the optimal policy $\pi^* (.)$ that minimizes $V^{\pi}(x_k)$ for all states $x_k$ subject to the unknown dynamics of the system.
This is done using QNNs as a VFA.


\begin{remark}
 If $\gamma =1 $ then the optimal control problem is a LQR problem.
\end{remark}








\section{Two-layer QNNs}
\label{section: QNN}
Consider the neural network in Fig~\ref{fig: QNN_with_one_output}
with one hidden layer, one output, and a degree two polynomial activation function, where  $ \mathcal{X}_i \in \mathbb{R}^n$ is the $i$-th input data vector, $\mathcal{\hat{Y}}_i \in \mathbb{R} $ is the corresponding output, $\mathcal{Y}_i \in \mathbb{R}$ is the output label corresponding to  the input $\mathcal{X}_i$, $L$ is the number of hidden neurons, $f(z)=az^2+bz+c$ is the polynomial activation function,  and $a \neq 0$, $b$, and $c$ are pre-defined constant coefficients. The notation $w_{kj}$ represents the weight from the $k$-th input-neuron  to the $j$-th hidden-neuron, and $\rho_{j}$ represents the weight from the j-th hidden-neuron to the output.
The input-output mapping is
\begin{figure}[t]
\begin{center}
\includegraphics[width=3.5in]{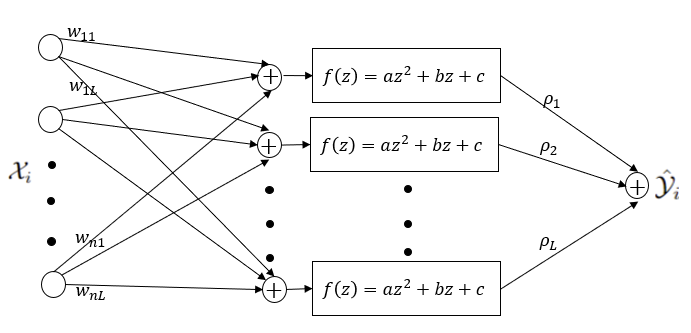}
\caption{two-layer QNN with one output}
\label{fig: QNN_with_one_output}
\end{center}
\end{figure}
      \begin{equation}
         \mathcal{\hat{Y}}_i=\sum_{j=1}^{L} f(\mathcal{X}_i^TW_j)\rho_{j} 
      \end{equation}    
      \noindent where $W_j = \begin{bmatrix}
        w_{1j} &
        w_{2j} &
        \hdots &
        w_{nj}
      \end{bmatrix}^T$.      

Reference~\cite{QNN} proposes the training optimization 
\begin{equation}
    \begin{split} \label{eq: changed neural net optimizaion}
      \min_{W_k,\rho_k} \: \: & l(\mathcal{\hat{Y}}-\mathcal{Y}) + \beta \sum_{j=1}^{L} | \rho_j |\\
        s.t. \: \: & \mathcal{\hat{Y}}_i=\sum_{j=1}^{L} f(\mathcal{X}_i^TW_j)\rho_{j}\\
         & \Vert  W_k \Vert_2 =1, \: \: \: \: k=1,2,...,L, \: \: \: \:i=1,2,...,N 
    \end{split}
\end{equation}




\noindent where $\beta\ge 0$ is a pre-defined regularization parameter, $l(.)$ is a convex loss function, N is the number of data points, $\mathcal{\hat{Y}} = \begin{bmatrix}
        \mathcal{\hat{Y}}_1 &
        \mathcal{\hat{Y}}_2 &
        \hdots &
        \mathcal{\hat{Y}}_N
      \end{bmatrix}^T$, and $\mathcal{Y} = \begin{bmatrix}
        \mathcal{Y}_1 &
        \mathcal{Y}_2 &
        \hdots &
        \mathcal{Y}_N
      \end{bmatrix}^T$. 

      \vspace{2mm}
The optimization problem~\eqref{eq: changed neural net optimizaion} can be equivalently solved by the dual convex optimization~\cite{QNN}
 \begin{equation}
\begin{split} \label{convex_formula}
      \min_{Z^+ , Z^-} \: \: & l(\mathcal{\hat{Y}} - \mathcal{Y}) + \beta  \:  Trace(Z_{1}^+ + Z_{1}^-)  \\
        s.t. \: \: & \mathcal{\hat{Y}}_{i}=a\mathcal{X}_{i}^T(Z_{1}^+ - Z_{1}^-)\mathcal{X}_{i}   +  b\mathcal{X}_{i}^T(Z_{2}^+ - Z_{2}^-) + \\
        & \: \: \: \: \: \: \: \:  \: \:  c \:  Trace(Z_{1}^+ - Z_{1}^-), \\ 
          & Z^+=\begin{bmatrix} Z_{1}^+ &    Z_{2}^+ \\(Z_{2}^+)^T &Trace(Z_{1}^+) \end{bmatrix} \geq 0 , \\
          & Z^-=\begin{bmatrix} Z_{1}^- &    Z_{2}^- \\(Z_{2}^-)^T &Trace(Z_{1}^-) \end{bmatrix} \geq 0, \\
          &  i=1, 2, \hdots N 
\end{split}
\end{equation}
where $Z_1^+$, $Z_2^+$, $Z_1^-$, $Z_2^-$ are optimization parameters~\cite{QNN}.
After training the neural network and obtaining $Z^+$, $Z^-$ from~\eqref{convex_formula}, the quadratic input-output mapping is~\cite{QNN,Luis_QNN}
    
    \begin{equation} \label{quadratic_mapping_QNN}
    \mathcal{\hat{Y}}_i=\begin{bmatrix}
    \mathcal{X}_i  \\ 1 \end{bmatrix}^T
    H
    \begin{bmatrix}
    \mathcal{X}_i  \\ 1 \end{bmatrix} 
     \end{equation}
where
\begin{eqnarray*}
H=\begin{bmatrix} a(Z_{1}^+ - Z_{1}^-) &    0.5b(Z_{2}^+ - Z_{2}^-)  \\0.5b(Z_{2}^+ - Z_{2}^-)^T &
    c   \left [ Trace  \left (Z_{1}^+ - Z_{1}^-  \right ) \right ]
    \end{bmatrix}
\end{eqnarray*}

    \begin{remark}
                If $b=c=0,$ and $ a =1$ then 
        \begin{equation}
            \mathcal{\hat{Y}}_i=
            \mathcal{X}^T_i  
            (Z_{1}^+ - Z_{1}^-)
            \mathcal{X}_i
        \end{equation}
    \end{remark}

\section{policy iteration-based Q-learning}
\label{section: policy-iteration based ADHDP}

 This section presents the Q-learning algorithm used to calculate the optimal policy $\pi^* (.)$.
The Q-function~\cite{watkins1989learning} is
   \begin{equation} 
         Q^{\pi}(x_k,u_k) = c(x_k,u_k) + \gamma V^{\pi}(x_{k+1}) \label{eqn: Q-function_def}
    \end{equation}





Using Bellman's principle of optimality~\cite{lewis2009reinforcement} and the definition of Q-function, equation~\eqref{eqn: Q-function_def} can be rewritten as
     \begin{gather} 
         Q^{\pi}(x_k,u_k) = c(x_k,u_k) + \gamma Q^{\pi}(x_{k+1},\pi(x_{k+1})) \label{eqn: Bellman-eqn}
    \end{gather}
    Obtaining the Q-function by equation~\eqref{eqn: Bellman-eqn} is known as the policy evaluation step~\cite{sutton2018reinforcement}.
After the policy evaluation step, the optimal stabilizing policy $\pi^\prime (.)$ for $Q^\pi$ is given by
     \begin{gather} 
         \pi^\prime (x_k) = \arg \min_{u_k} Q^{\pi}(x_k,u_k) \label{eq: policy improvement vanila}
    \end{gather} 
where $V^{\pi^\prime}(x_k) <  V^{\pi}(x_k)$ for all $x_k$~\cite{ADHDP}.
Solving \eqref{eq: policy improvement vanila} is known as the policy improvement step.
The optimal policy $\pi^* (.)$ can be attained from any initial stabilizing policy by repeatedly performing policy evaluation and policy improvement steps until convergence, starting from an initial stabilizing policy $\pi_0(.)$. Algorithm~\ref{alg: PI_ADHDP_vanila} is the PI-based Q-learning algorithm.




         \begin{algorithm}[h]
        \caption{Policy iteration-based Q-learning} \label{alg: PI_ADHDP_vanila}
        \begin{algorithmic} 
            \State Select an initial stabilizing policy $\pi_0(x_k) = K^{\pi_0}x_k$. Then, for $j=0,1,2,\hdots$ perform policy evaluation and policy improvement steps as follows,
            \State \textbf{Policy Evaluation:} 
            \State $\: \: \:$ Obtain $Q^{\pi_j}(.)$ by solving 
            \begin{equation}
                \label{eq: PEA1}
                Q^{\pi_j}(x_k,u_k) = c(x_k,u_k) + \gamma Q^{\pi_j}(x_{k+1},\pi_j(x_{k+1}))
            \end{equation}
            \State \textbf{Policy improvement:}
            \State $\: \: \:$ Obtain $\pi_{j+1}(.)$ by solving 
            \begin{equation}
                \pi_{j+1}(x_k) = \arg \min_{u_k} Q^{\pi_j}(x_k,u_k)
            \end{equation}
        \end{algorithmic}
    \end{algorithm}
    
Solving equation~\eqref{eq: PEA1} can be challenging as it is a Lyapunov function \cite{lewis2009reinforcement}. However, exploiting its fixed-point nature, we can obtain $Q^\pi(.)$ for the stabilizing policy $\pi(.)$ as
    \begin{equation}
        \label{eq: iterative Bellman van}
        Q^{\pi}_{i+1}(x_{k},u_k) = c(x_k,u_k) +  \gamma  Q^{\pi}_{i}(x_{k+1},\pi(x_{k+1})) 
    \end{equation}
Under certain conditions that will be detailed later, starting with any initial value $Q^\pi_0$ and using the iteration (\ref{eq: iterative Bellman van}) will lead to $Q^\pi_i \rightarrow Q^\pi$. Consequently,  equation \eqref{eq: PEA1}  will be replaced by the iterative equation~\eqref{eq: iterative Bellman van}.

\begin{remark}
     PI algorithms require persistent excitation (PE)~\cite{lewis2009reinforcement}\cite{lewis2010reinforcement}. To achieve PE, a probing noise term $n_k$ can be added to the input $u_k$. 
    It is shown in reference~\cite{lewis2010reinforcement} that the solution computed by PI differs from the actual value corresponding
    to the Bellman equation when the probing noise term $n_k$ is added. It is discussed in the same reference that adding the discount factor $\gamma$ reduces this harmful effect of $n_k$.
\end{remark}

\section{The proposed approach}
\label{section: our approach}
This section presents how to perform policy evaluation and policy improvement steps by designing a two-layer QNN.
 According to reference~\cite{bradtke1994adaptive}, the Q-function is a quadratic form
     \begin{gather} \label{Q_function_linear_systems}
         Q^{\pi}(x_k,u_k) = 
         \begin{bmatrix} x_k \\ u_k  \end{bmatrix}^T
         H^\pi
         \begin{bmatrix} x_k \\ u_k  \end{bmatrix} 
     \end{gather}
\noindent where $H^\pi \in \mathbb{R}^{(n_x + n_u) \times (n_x + n_u)}$. As a result, a QNN with coefficients $b=c=0$, $a=1$  is the perfect candidate to approximate the Q-function. 
The policy evaluation step then obtains $H^\pi$ for the stabilizing policy $\pi(.)$ by solving,
    \begin{multline}
     \label{eq: policy_evaluation_vanilla_LQR_0}
         \begin{bmatrix} x_k \\ u_k   \end{bmatrix}^T
          {H}^\pi
    \begin{bmatrix} x_k \\ u_k    \end{bmatrix}
         = x_k^T Q x_k  +  u_k ^T R u_k  +  \\ \gamma
         \begin{bmatrix} x_{k+1} \\ u_{k+1}   \end{bmatrix}^T
          {H}^\pi
    \begin{bmatrix} x_{k+1} \\ u_{k+1}   \end{bmatrix}
   \end{multline}

\begin{remark}

         Note that~\eqref{eq: policy_evaluation_vanilla_LQR} is a scalar equation, $H^\pi$ is symmetric, and 
        \begin{gather}
            \begin{bmatrix}
                x_k \\ u_k
            \end{bmatrix} \in \mathbb{R}^{n_x + n_u}
        \end{gather}
 Therefore, the matrix $H^\pi$ has  $M = \frac{ (n_x + n_u) (n_x + n_u + 1)}{2}$ unknown independent elements and $ N \geq M$  data samples are needed to obtain $H^\pi$ from equation~\eqref{eq: policy_evaluation_vanilla_LQR}. 
\end{remark}
We propose to obtain $H^\pi$ in the policy evaluation step by employing the iterative equation~\eqref{eq: policy_evaluation_vanilla_LQR}. The proof of convergence is presented in Lemma~\ref{lemma: policy_evaluation_LQR}.
    \begin{lemma}
    \label{lemma: policy_evaluation_LQR}
Assume that the unknown system (\ref{linear_dynamics}) is controllable.
Then the solution of the iterative equation \eqref{eq: policy_evaluation_vanilla_LQR} starting with any initial value $H^\pi_0$ converges to $H^\pi$ provided the initial policy $\pi(.)$ is stabilizing.
       \begin{multline}
\label{eq: policy_evaluation_vanilla_LQR}
              \begin{bmatrix} x_k \\ u_k  \end{bmatrix}^T
          {H}^\pi_{i+1}
    \begin{bmatrix} x_k \\ u_k   \end{bmatrix}
         = x_k^T Q x_k  + u_k^T R u_k  + \\ \gamma
         \begin{bmatrix} x_{k+1} \\ u_{k+1}    \end{bmatrix}^T
          {H}^\pi_i
    \begin{bmatrix} x_{k+1} \\ u_{k+1}   \end{bmatrix}
     \end{multline}
    \end{lemma}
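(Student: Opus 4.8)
The plan is to collapse the scalar functional iteration~\eqref{eq: policy_evaluation_vanilla_LQR} into a linear matrix recursion and then settle convergence through the spectral radius of the associated linear operator. First I would exploit the fact that, along the policy $\pi(x_k)=-K^\pi x_k$, the successor input is fixed by the successor state, so that with the augmented vector $z_k=\begin{bmatrix} x_k^T & u_k^T\end{bmatrix}^T$ one has $z_{k+1}=\bar{A}z_k$, where
\begin{equation}
\bar{A}=\begin{bmatrix} I \\ -K^\pi \end{bmatrix}\begin{bmatrix} A & B \end{bmatrix}.
\end{equation}
Substituting $z_{k+1}=\bar{A}z_k$ into~\eqref{eq: policy_evaluation_vanilla_LQR} and using that, for $N\geq M$ persistently exciting samples, a symmetric matrix is uniquely determined by its quadratic form (so the vectors may be stripped from the scalar identity) yields the affine recursion
\begin{equation}
H^\pi_{i+1}=\mathcal{Q}+\gamma\,\bar{A}^T H^\pi_i \bar{A},\qquad \mathcal{Q}=\begin{bmatrix} Q & 0 \\ 0 & R \end{bmatrix}.
\end{equation}

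Next I would identify the fixed point $H^\pi$ as the solution of the discounted Stein equation $H^\pi=\mathcal{Q}+\gamma\,\bar{A}^T H^\pi \bar{A}$ and track the error $E_i=H^\pi_i-H^\pi$, which obeys the purely linear recursion $E_{i+1}=\gamma\,\bar{A}^T E_i \bar{A}$, hence $E_i=\gamma^i (\bar{A}^T)^i E_0 \bar{A}^i$ and $\Vert E_i\Vert \leq \gamma^i \Vert \bar{A}^i\Vert^2 \Vert E_0\Vert$. With $0<\gamma\leq 1$, convergence $E_i\to 0$ for arbitrary $H^\pi_0$ therefore reduces entirely to showing $\Vert \bar{A}^i\Vert\to 0$, i.e.\ that $\bar{A}$ is Schur stable.

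The key step, and the point where the stabilizing hypothesis enters, is to relate the spectrum of $\bar{A}$ to that of the closed-loop matrix $A_{cl}=A-BK^\pi$. Writing $\bar{A}=MN$ with $M=\begin{bmatrix} I \\ -K^\pi\end{bmatrix}$ and $N=\begin{bmatrix} A & B\end{bmatrix}$, its nonzero eigenvalues coincide with those of $NM=A-BK^\pi=A_{cl}$, the remaining $n_u$ eigenvalues being zero. Since $\pi(.)$ is stabilizing, every eigenvalue of $A_{cl}$ lies strictly inside the unit disk, so $\rho(\bar{A})=\rho(A_{cl})<1$ and $\Vert \bar{A}^i\Vert\to 0$ by Gelfand's formula, giving $H^\pi_i\to H^\pi$. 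The same spectral bound shows that $1$ is not an eigenvalue of the operator $H\mapsto \gamma\,\bar{A}^T H\bar{A}$, so the fixed point $H^\pi$ exists and is unique.

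I expect the main obstacle to be the eigenvalue-matching step: making rigorous that the nonzero spectrum of the rank-deficient product $\bar{A}$ equals the spectrum of $A_{cl}$ via the standard $MN$/$NM$ argument, and verifying that the $n_u$ spurious zero eigenvalues are harmless in the error recursion. Controllability plays only an indirect role in this lemma, guaranteeing that a stabilizing $K^\pi$ exists so that the hypothesis is non-vacuous, whereas the actual convergence rests on Schur stability of $A_{cl}$ together with the discount factor $\gamma\leq 1$.
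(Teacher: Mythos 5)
Your proof is correct, but it takes a genuinely different route from the paper's. The paper never forms a matrix recursion: it unrolls the scalar iteration \eqref{eq: policy_evaluation_vanilla_LQR} along the closed-loop trajectory, obtaining the partial sum $\sum_{j=0}^{i}\gamma^j c(x_{k+j},u_{k+j})$ plus a tail term $\gamma^{i+1}z_{k+i+1}^T H_0^\pi z_{k+i+1}$; stability of $\pi(.)$ makes the tail vanish, and the limiting infinite sum is recognized as $c(x_k,u_k)+\gamma V^\pi(x_{k+1})=Q^\pi(x_k,u_k)$, whence the quadratic forms converge to the Q-function values and $H_i^\pi \to H^\pi$. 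You instead strip the vectors to get the Stein recursion $H_{i+1}^\pi=\mathcal{Q}+\gamma\bar{A}^T H_i^\pi\bar{A}$ and argue spectrally, transferring Schur stability from $A-BK^\pi$ to the lift $\bar{A}$ via the $MN$/$NM$ eigenvalue identity. Your route buys a geometric convergence rate, an explicit existence/uniqueness statement for the fixed point, and it surfaces the sample-richness condition needed to convert scalar identities into matrix identities --- a condition the paper also relies on (both to define $H_{i+1}^\pi$ from data and to infer matrix convergence from convergence of quadratic forms) but relegates to a remark rather than the proof. The paper's route is shorter, stays at the trajectory/data level, and ties the limit directly to the definition of the value function without ever exhibiting the closed-loop lift $\bar{A}$. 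One point you should tighten: the lemma's $H^\pi$ is \emph{defined} as the Q-function matrix in \eqref{Q_function_linear_systems}, so you must observe that this matrix satisfies your Stein equation (combine the Bellman equation \eqref{eqn: Bellman-eqn} with $z_{k+1}=\bar{A}z_k$) and then invoke your uniqueness claim to conclude that your fixed point coincides with the paper's $H^\pi$; as written, you "identify" the fixed point with $H^\pi$ by fiat. You are also right that controllability enters only to make the stabilizing-policy hypothesis non-vacuous; the paper uses it the same way.
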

    \noindent \begin{proof}
    \noindent Applying  equation~\eqref{eq: policy_evaluation_vanilla_LQR}
recursively yields
        \begin{multline}
                \begin{bmatrix} x_k \\ u_k  \end{bmatrix}^T {H}_{i+1}^\pi \begin{bmatrix} x_k \\ u_k   \end{bmatrix} 
             =  \sum^{i}_{j=0}  \gamma^j c(x_{k+j}, u_{k+j}) + \\ \gamma^{i+1}
             \begin{bmatrix} x_{k+i+1} \\ u_{k+i+1} \end{bmatrix}^T
              {H}_{0}^\pi
             \begin{bmatrix}  x_{k+i+1} \\ u_{k+i+1}  \end{bmatrix}  
        \end{multline}
It should be noted that the policy $\pi(.)$ is a stabilizing policy, therefore $\lim_{i\to\infty}x_{k+i+1}=0$ and $\lim_{i\to\infty}u_{k+i+1}=0$ for all $k$. Consequently, for any $H^\pi_0$,   
\begin{gather} \label{eq: one use}
         \lim_{i\to\infty}\gamma^{i+1}
         \begin{bmatrix} x_{k+i+1} \\ u_{k+i+1}^\pi  \end{bmatrix}^T
          {H}_{0}^\pi
         \begin{bmatrix}  x_{k+i+1} \\ u_{k+i+1}^\pi  \end{bmatrix}= 0
\end{gather}

\noindent and therefore
    \begin{eqnarray}\label{mainresult}
           \lim_{i\to\infty} \begin{bmatrix} x_k \\ u_k  \end{bmatrix}^T {H}_{i+1}^\pi \begin{bmatrix} x_k \\ u_k   \end{bmatrix} = 
            \sum^{\infty}_{j=0}  \gamma^j c(x_{k+j}, u_{k+j})=\nonumber\\
	=c(x_k,u_k)+\gamma V^\pi(x_{k+1})
    \end{eqnarray}
Therefore, from equations (\ref{eqn: Q-function_def}), (\ref{Q_function_linear_systems}), and (\ref{mainresult}), $H_i^\pi\to H^\pi$ when $i\to\infty$.
\end{proof}

\begin{remark}
In practice the condition
    \begin{equation}
        \lVert H^\pi_i - H^\pi_{i-1}  \rVert < \epsilon,
    \end{equation}
is used as the stopping criterium of the algorithm.
\end{remark}
Due to the implications of Lemma~\ref{lemma: policy_evaluation_LQR}, the problem of policy evaluation transforms into the task of computing $H^\pi_{i+1}$ from equation~\eqref{eq: policy_evaluation_vanilla_LQR} given  $H^\pi_{i}$. This can be done by training a two-layer QNN.
Since it is assumed that one has access to the state $x_k$, one can calculate $ \mathcal{Y}_k$ defined as 
\begin{gather}
    \mathcal{Y}_k = x_k^T Q x_k + u_k^T R u_k + 
          \gamma
          \begin{bmatrix} x_{k+1} \\ u_{k+1}  \end{bmatrix}^T
          {H}_{i}^\pi
         \begin{bmatrix} x_{k+1} \\ u_{k+1}  \end{bmatrix}.
\end{gather}
\noindent Therefore, from (\ref{eq: policy_evaluation_vanilla_LQR}),
\begin{equation}
    \mathcal{X}_k^T {H}_{i+1}^\pi \mathcal{X}_k = \mathcal{Y}_k
\end{equation}
where $\mathcal{X}_k^T = \begin{bmatrix} x_k^T & u_{k}^T \end{bmatrix}$.
Thus, $H^\pi_{i+1}$ can be obtained from the training of a QNN as the solution of the convex optimization~\eqref{convex_formula} using a set of input data points $\mathcal{X}_k$ and the corresponding labels $\mathcal{Y}_k$, as well as the coeffficients $a=1, \: b=0, \: c=0$.
     
\subsection{Policy improvement: }
     \noindent In this section, the policy improvement step is addressed for the stabilizing policy $\pi(\cdot)$ using $H^\pi$ from the policy evaluation. We first partition  $H^\pi$ as
    \begin{gather}
         \begin{bmatrix} x_k \\ u_k \end{bmatrix}^T
         {H}^\pi
         \begin{bmatrix} x_k \\ u_k  \end{bmatrix} = 
         \begin{bmatrix} x_k \\ u_k  \\ \end{bmatrix}^T
         \begin{bmatrix} 
             {H}_{xx}^\pi & {H}_{xu}^\pi  \\ 
            ({H}_{xu}^\pi)^T & {H}_{uu}^\pi  
         \end{bmatrix}
         \begin{bmatrix} x_k \\ u_k  \end{bmatrix}
    \end{gather}

     \begin{lemma}
      The policy improvement for the stabilizing policy $\pi(.)$ is given by
     \begin{gather}
              \pi^\prime (x_k) = - ({H}_{uu}^\pi)^{-1}({H}_{xu}^\pi)^Tx_k \label{eq: policy improvement}
     \end{gather}
    \end{lemma}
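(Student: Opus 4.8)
The plan is to treat the policy improvement~\eqref{eq: policy improvement vanila} as an unconstrained minimization of the quadratic form $Q^\pi(x_k,u_k)$ over $u_k$ for each fixed state $x_k$, exploiting the fact established in~\eqref{Q_function_linear_systems} that $Q^\pi$ is quadratic. First I would expand the quadratic form through the block partition of $H^\pi$, obtaining
\[
  Q^\pi(x_k,u_k) = x_k^T H_{xx}^\pi x_k + 2\, x_k^T H_{xu}^\pi u_k + u_k^T H_{uu}^\pi u_k,
\]
where the cross term is doubled because the scalar $x_k^T H_{xu}^\pi u_k$ equals its own transpose and $H^\pi$ is symmetric.

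Next, since $Q^\pi$ is quadratic in $u_k$, I would locate its stationary point by setting the gradient with respect to $u_k$ to zero. Computing $\nabla_{u_k} Q^\pi = 2(H_{xu}^\pi)^T x_k + 2 H_{uu}^\pi u_k$ and equating it to zero gives $H_{uu}^\pi u_k = -(H_{xu}^\pi)^T x_k$, hence $u_k = -(H_{uu}^\pi)^{-1}(H_{xu}^\pi)^T x_k$ whenever $H_{uu}^\pi$ is invertible. This is exactly the claimed expression~\eqref{eq: policy improvement} for $\pi^\prime(x_k)$.

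The main obstacle is to justify that $H_{uu}^\pi$ is invertible and that this stationary point is indeed the unique global minimizer rather than a saddle point or a maximizer; both conclusions follow once $H_{uu}^\pi \succ 0$. I would establish positive definiteness from the structure of the Q-function: substituting the dynamics $x_{k+1} = Ax_k + Bu_k$ together with the quadratic value function $V^\pi(x) = x^T P^\pi x$, where $P^\pi \succeq 0$ for a stabilizing policy, into the definition~\eqref{eqn: Q-function_def} shows that the lower-right block is $H_{uu}^\pi = R + \gamma B^T P^\pi B$. Since $R \succ 0$ by assumption and $\gamma B^T P^\pi B \succeq 0$, we obtain $H_{uu}^\pi \succ 0$, which simultaneously guarantees invertibility, strict convexity of $Q^\pi$ in $u_k$, and therefore that the computed $\pi^\prime(x_k)$ is the unique minimizer. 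The surviving point to verify is that the quadratic value function is genuinely finite and its representing matrix $P^\pi$ is positive semidefinite, which is precisely where the stabilizing assumption on $\pi(\cdot)$ (already used in Lemma~\ref{lemma: policy_evaluation_LQR}) enters.
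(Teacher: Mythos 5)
Your proposal is correct, and its skeleton coincides with the paper's proof: expand the quadratic form $Q^\pi(x_k,u_k)$ through the block partition of $H^\pi$, set the gradient with respect to $u_k$ to zero to get $H_{uu}^\pi u_k = -(H_{xu}^\pi)^T x_k$, and invoke $H_{uu}^\pi \succ 0$ both to invert and to certify that the stationary point is the unique minimizer. Where you genuinely diverge is in how $H_{uu}^\pi \succ 0$ is justified. The paper simply asserts that ``the matrix $H^\pi$ is positive definite'' and deduces positive definiteness of the diagonal block $H_{uu}^\pi$; it never actually proves that $H^\pi \succ 0$. You instead derive the block structure from the model: substituting the dynamics~\eqref{linear_dynamics} and $V^\pi(x)=x^T P^\pi x$ into~\eqref{eqn: Q-function_def} gives $H_{uu}^\pi = R + \gamma B^T P^\pi B$, which is positive definite because $R\succ 0$ and $P^\pi \succeq 0$ (the latter following from the stabilizing assumption, which makes the series defining $V^\pi$ convergent with $Q,R\succ 0$). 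This buys you a self-contained argument that closes the gap the paper leaves open, at the mild conceptual price of invoking the unknown matrices $A,B$ inside the proof --- which is legitimate, since only the algorithm, not the convergence analysis, must be model-free. Your version is therefore the more rigorous of the two; if anything, you could note explicitly that $P^\pi$ exists and is finite precisely because $\pi(\cdot)$ is stabilizing and the state transition matrix $A-BK^\pi$ is Schur, which is the same fact exploited in Lemma~\ref{lemma: policy_evaluation_LQR}.
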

    
    \begin{proof}
The improved policy $\pi^\prime (.)$ is obtained by
    \begin{gather} 
         \pi^\prime (x_k) = \arg \min_{u_k} Q^{\pi}(x_k,u_k) 
         \label{eq: policy improvement vanila_2}
    \end{gather} 
     \noindent Since  $Q^{\pi}(x_k,u_k )$ is a quadratic form, the necessary and sufficient conditions of optimality are
    \begin{gather} 
         \nonumber  \frac{\partial Q^{\pi}(x_k,u_k )}{\partial u_k } = 0 \\ 
         \frac{\partial^2 {Q}^{\pi}(x_k,u_k )}{\partial u_k^2}  >0 
          \label{eq: solve_policy_imp_by_derivative}
    \end{gather} 
    
    \noindent The Q-function  can be written as 
         \begin{gather}
         \nonumber Q^{\pi}(x_k , u_k ) = 
         \begin{bmatrix} x_k \\ u_k  \\ \end{bmatrix}^T
         \begin{bmatrix} 
             {H}_{xx}^\pi & {H}_{xu}^\pi  \\ 
            ({H}_{xu}^\pi)^T & {H}_{uu}^\pi  
         \end{bmatrix}
         \begin{bmatrix} x_k \\ u_k  \end{bmatrix} = \\
         x_k^T    H_{xx}^\pi   x_k    +   x_k^T  H_{xu}^\pi    u_k +
         u_k^T  (H_{xu}^\pi)^T   x_k   +   u_k^T  H_{uu}^\pi   u_k  \label{QFA_open_form_for_b=c=0}
     \end{gather}
     
     \noindent Therefore, the solution to the first constraint yields
    \begin{gather}
         \nonumber \frac{\partial Q^{\pi}(x_k,u_k )}{\partial u_k } = 0\Leftrightarrow
         {H}_{uu}^\pi u_k  + ({H}_{xu}^\pi)^Tx_k  = 0 \Leftrightarrow \\
         u_k  = - ({H}_{uu}^\pi)^{-1}({H}_{xu}^\pi)^Tx_k \label{eq: policy improvement}
     \end{gather}

     \noindent  Note that the matrix $H^\pi$ is positive definite. Therefore, all matrices on the main diagonal of $H^\pi$, including $H_{uu}^\pi$, are also positive definite. As a result, the inverse of $H_{uu}^\pi$ does exist.
 
    \noindent The second constraint in~\eqref{eq: solve_policy_imp_by_derivative} is thus satisfied since $H_{uu}^\pi >0$.
Thus, the policy obtained in~\eqref{eq: policy improvement} is  the unique minimizer $\pi^\prime (x_k)$.
\end{proof} 
The complete procedure for solving the LQR is presented in Algorithm~\ref{alg: PI_LQR}.

          \begin{algorithm}[t]
        \caption{Solving LQR with QNN and Q-learning } \label{alg: PI_LQR}
        \begin{algorithmic} 
            \State Choose $\epsilon$, $N$, $\gamma$.
            \State Select the initial stabilizing policy $ \pi_0$. Then, for $j=0,1,2,\hdots$ perform policy evaluation and policy improvement steps until  convergence 
            \State \textbf{Policy Evaluation:}
            \State $\: \: \: \: \:$ $i \gets 0$   
            \State $\: \: \: \: \:$ Choose a random $H^{\pi_j}_0$
            \State $\: \: \: \: \:$ \textbf{repeat}  
             
            \State $\: \: \: \: \:$ $\: \: \: \: \: $  Train the QNN by $N$ data samples with $\mathcal{X}_k$ as the  
            \State $\: \: \: \: \:$ $\: \: \: \: \:  \: \: \: \: \: \:$ inputs  and $\mathcal{Y}_{k}$ as the output labels.
            \State $\: \: \: \: \:$ $\: \: \: \: \:$ Obtain the input-output mapping as 
                \begin{equation}
                    \mathcal{X}_k^T H \mathcal{X}_k = \mathcal{\hat{Y}}_{k}
                \end{equation}
                \State $\: \: \: \: \:$  $\: \: \: \: \:$ $i \gets i+1 $
            \State $\: \: \: \: \:$ $\: \: \: \: \:$ $H^{\pi_{j}}_i \gets H$
 
            \State $\: \: \: \: \:$ \textbf{Until}  $||H^{\pi_j}_i - H^{\pi_j}_{i-1} ||< \epsilon$
            \State $\: \: \: \: \:$ $H^{\pi_j} \gets H^{\pi_j}_i$
        \vspace{5pt}
            \State \textbf{Policy improvement:}
            \State $\: \: \: \: \:$ Obtain $\pi_{j+1}    $  such that   
            \begin{equation}
                \begin{split} 
                    \: \: \: \: \:  \pi_{j+1} =  - (H_{uu}^{\pi_j})^{-1}  (H_{xu}^{\pi_j})^T  x_k                 
                \end{split}
            \end{equation} 
        \end{algorithmic}
    \end{algorithm}

    \section{Simulations}
    \label{section: LQR implementation}
    
         This sectiom considers a quadrotor flying at a constant altitude movimg in the $x_1$ direction. It is shown that the control policy converges to the LQR solved by conventional methods. 
    The quadrotor is modeled by the state-space
\begin{gather}
      \begin{bmatrix}
        \dot{x}_1  \\ 
        \dot{x}_2  \\
        \dot{x}_3\\
        \dot{x}4
      \end{bmatrix} =
      \begin{bmatrix}
          0 & 1 & 0 & 0   \\ 
          0 & -0.1 & 10 & 0\\
          0 & 0 & 0 & 1 \\
          0 & 0 & 0 & 0
      \end{bmatrix}
      \begin{bmatrix}
          x_1\\ 
          x_2\\
          x_3\\
          x_4
      \end{bmatrix}
      +
      \begin{bmatrix}
          0 \\ 
          0\\
          0\\
          4.35
      \end{bmatrix}u
\end{gather} 
 where $x_1$ is the horizontal position, $x_2$ is the speed, $x_3$ is the pitch angle, $x_4$ is the pitch rate
and the initial state is
\begin{equation}
    \begin{bmatrix}
        x_1(0) \\
        x_2(0) \\
        x_3(0) \\
        x_4(0)
    \end{bmatrix} = 
        \begin{bmatrix}
        -10 \\
        0   \\
        0   \\
        0
    \end{bmatrix} 
\end{equation}
Assume that the sampling time is $T=0.1$ seconds. The discretized model~\eqref{eq: linear_quadrotor_tustin} is obtained using Tustin's method in order to compare the result with the discrete-time LQR.
\begin{gather}
      \begin{bmatrix}
        x_{1,k+1} \\ 
        x_{2,k+1} \\
        x_{3,k+1} \\
        x_{4,k+1} 
      \end{bmatrix} =
      \begin{bmatrix}
            1  &  0.1  &  0.05  &    0.003 \\
                 0  &  0.99  &  0.99  &  0.05 \\
                 0  &       0  &  1 &  0.1 \\
                 0  &       0  &       0  &  1
      \end{bmatrix}
      \begin{bmatrix}
        x_{1,k } \\ 
        x_{2,k } \\
        x_{3,k } \\
        x_{4,k } 
      \end{bmatrix}
      +
      \begin{bmatrix}
            0.001 \\
            0.011 \\
            0.022 \\
            0.435
      \end{bmatrix}u_k  \label{eq: linear_quadrotor_tustin}
\end{gather}

\noindent Rewrite the control policy $\pi_j(.)$ as:  
    \begin{gather} 
            \pi_j(x_k)=
            \begin{bmatrix}
                k_1^{\pi_j} & k_2^{\pi_j} & k_3^{\pi_j} & k_4^{\pi_j} 
            \end{bmatrix}
            \begin{bmatrix}
                x_{1,k} \\ x_{2,k} \\ x_{3,k} \\ x_{4,k} 
            \end{bmatrix}  
            \label{eq: shand}    
    \end{gather}
 In this example, we choose  the following parameter values:
    \begin{gather}
       \nonumber  R = 100,\: N=100, \: \gamma=1, \:\beta=0.005 \\
          Q = \begin{bmatrix}
          0.01 & 0 &  0  & 0 \\
         0   &  1  & 0  & 0 \\
         0   &  0 &  1 &  0 \\
         0   &  0 &  0 &  10
        \end{bmatrix}
    \end{gather} 
 
  \noindent The optimal LQR controller is 
     \begin{gather} 
            \pi^*(x_k)=-
            \begin{bmatrix}
                  0.046 &  0.464 &  4.347  & 2.014
            \end{bmatrix}            
            \begin{bmatrix}
                x_{1,k} \\ x_{2,k} \\ x_{3,k} \\ x_{4,k}
            \end{bmatrix}    
    \end{gather}
We now pretend that we do not know the model of the quadrotor and run the Algorithm~\ref{alg: PI_LQR}.
Fig~\ref{fig: Convergence_quadrotor} demonstrates the convergence of $k_1^{\pi_j}$, $k_2^{\pi_j}$, $k_3^{\pi_j}$, $k_4^{\pi_j}$ to their optimal values. To illustrate this convergence, we conducted five simulations using random initial stabilizing policies. The random initial policies are given in the Table~\ref{table: LQR_quadrotor}.
\begin{figure}[t]
    \centering
        \includegraphics[width=3.8in]{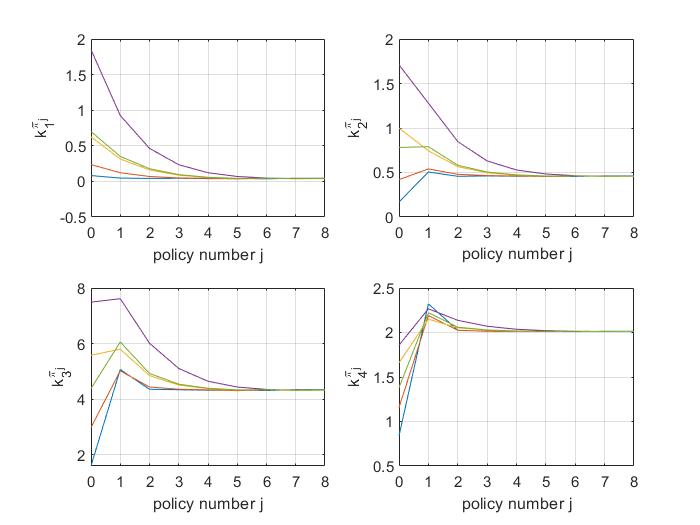}
        \label{fig:first}
    \caption{Convergence of the policy iteration to LQR controller}
    \label{fig: Convergence_quadrotor}
\end{figure}
The trajectory of the quadrotor's position and its speed over time using the optimal policy are depicted in Fig.~\ref{fig: UAV_trajectory} where one sees the convergence to the origin.
\begin{figure}[htbp]
    \centering
        \includegraphics[width=3.8 in]{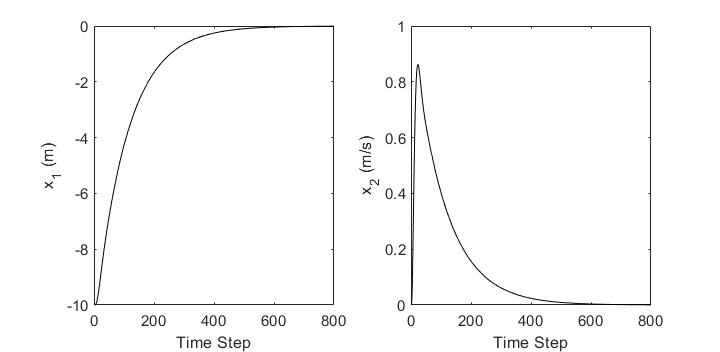}
        \label{fig:first}
    \caption{Position and speed trajectory using the optimal policy.}
    \label{fig: UAV_trajectory}
\end{figure}



         \begin{table}[t]
            \centering
            \begin{tabular}{|c|c|}
                  \hline
                  Simulation number & $K^{\pi_0}$ \\
                  \hline
                  Simulation 1 & $\begin{pmatrix}   0.082  &  0.169  &  1.592  &  0.838   \end{pmatrix}$ \\
                  Simulation 2 & $\begin{pmatrix}   0.236  &  0.420  &  2.978  &  1.156   \end{pmatrix}$ \\
                  Simulation 3 & $\begin{pmatrix}   0.626  &  0.998  &  5.578  &  1.660    \end{pmatrix}$ \\
                  Simulation 4 & $\begin{pmatrix}   1.851  &  1.709  & 7.491   &  1.858  \end{pmatrix}$ \\
                  Simulation 5 & $\begin{pmatrix}   0.701 &   0.781  &  4.380  &   1.379   \end{pmatrix}$ \\
                  \hline
            \end{tabular}
            \caption{The random initial stabilizing policies }
            \label{table: LQR_quadrotor}
    \end{table}

\bibliographystyle{unsrt} 
\bibliography{citation} 


\end{document}